\let\a=\alpha \let\be=\beta \let\g=\gamma 
  \let\h=\eta 
 \let\k=\kappa \let\la=\lambda \let\m=\mu
\let\n=\nu \let\x=\xi \let\p=\pi \let\r=\rho \let\s=\sigma
\let\om=\omega 
   \let\Ps=\Psi
\let\La=\Lambda  \let\D=\Delta
\let\qd=\quad  
\def\epp{\, .}
\def\epc{\, ,}
\def\2{\frac{1}{2}} \def\4{\frac{1}{4}}
\def\6{\partial}
\def\+{\dagger}
\def\<{\langle} \def\>{\rangle}
\def\i{{\rm i}}
\def\rd{{\rm d}}
\def\re{{\rm e}}
\DeclareMathOperator{\sh}{sh}
\DeclareMathOperator{\ch}{ch}
\DeclareMathOperator{\cth}{cth}
\def\Re{{\rm Re\,}} \def\Im{{\rm Im\,}}
\def\fa{\mathfrak{a}}
\renewcommand{\appendix}{%
   \renewcommand{\section}{
        \secdef\Appendix\sAppendix}%
   \setcounter{section}{0}%
   \renewcommand{\thesection}{\Alph{section}}%
   \renewcommand{\theequation}{\thesection.\arabic{equation}}%
}
\newcommand{\Appendix}[2][?]{%
     \refstepcounter{section}%
     \setcounter{equation}{0}%
     \addcontentsline{toc}{appendix}%
          {\protect\numberline{\appendixname~\thesection} #1}%
     \vspace{\baselineskip}%
     {\noindent\Large\bfseries\appendixname\ \thesection: #2\par}%
     \sectionmark{#1}\vspace{\baselineskip}}
\newcommand{\sAppendix}[1]{%
     {\noindent\large\bfseries\appendixname\:: #1\par}%
     \sectionmark{#1}\vspace{\baselineskip}}
\renewcommand{\tilde}{\widetilde}
\begin{document}

\title{Properties of linear integral equations related to the six-vertex
model with disorder parameter}

\author{Hermann Boos and Frank G{\"o}hmann}
\address{Fachbereich C -- Physik, Bergische Universit\"at Wuppertal,\\
42097 Wuppertal, Germany}

\begin{abstract}
One of the key steps in recent work on the correlation functions of
the XXZ chain was to regularize the underlying six-vertex model by a
disorder parameter $\alpha$. For the regularized model it was shown
that all static correlation functions are polynomials in only two
functions. It was further shown that these two functions can be
written as contour integrals involving the solutions of a certain type of
linear and non-linear integral equations. The linear integral equations
depend parametrically on $\alpha$ and generalize linear integral equations
known from the study of the bulk thermodynamic properties of the model. In
this note we consider the generalized dressed charge and a generalized
magnetization density. We express the generalized dressed charge as a
linear combination of two quotients of $Q$-functions, the solutions of
Baxter's $t$-$Q$-equation. With this result we give a new proof of a lemma
on the asymptotics of the generalized magnetization density as a function
of the spectral parameter.
\end{abstract}

\keywords{quantum spin chains, correlation functions
}

\bodymatter

\section{Introduction}
In our present understanding of the thermodynamics \cite{Kluemper93}
and the finite temperature correlation functions\cite{GKS04a,%
BJMST08a,JMS08,BoGo09} of the XXZ quantum spin chain certain complex
valued functions defined as solutions of linear or non-linear integral
equations play an important role. In first place we have to mention the
so-called auxiliary function $\fa$, satisfying the non-linear integral
equation
\begin{equation} \label{nlietemhom}
     \ln (\fa (\la| \k)) = - 2\k \h - \frac{2J \sh(\h) \re (\la)}{T}
        - \int_{C} \frac{\rd \m}{2 \p \i}
		        K(\la - \m) \ln (1 + \fa (\m| \k )) \epp
\end{equation}
\enlargethispage{1ex}
Here $J$ sets the energy scale of the spin chain, $T$ is the temperature,
and $\h$ controls the anisotropy\footnote{The anisotropy parameter of
the XXZ Hamiltonian is $\D = \ch (\h)$ and the quantum group parameter
$q = \re^\h$. For simplicity we shall assume throughout that $\Re \h = 0$
and $0 < \Im \h < \p/2$. This means to consider the XXZ chain in the
critical regime.}. The bare energy $\re (\la)$ and the kernel $K(\la)$ are
defined as
\begin{equation}
     \re(\la) = \cth(\la) - \cth(\la + \h) \epc \qd
     K(\la) = \cth(\la - \h) - \cth(\la + \h) \epp
\end{equation}
The integration contour $C$ encircles the real axis at a distance
slightly smaller than $\g/2 = \Im \h/2$. The twist parameter $\k$ is
proportional to the magnetic field $h$, $\k = h/2T\h$.

The auxiliary function $\fa$ determines the free energy per lattice site,
\begin{equation} \label{freee}
     f (h,T) = - \frac{h}{2} - T \int_{C} \frac{\rd \la}{2 \p \i} \,
                 \re (\la) \ln (1 + \fa (\la| \k)) \epc
\end{equation}
of the spin chain and, hence, all its thermodynamic properties. This
explains the importance of $\fa$.

The magnetization, for instance, is defined as
\begin{equation}
     m(h,T) = - \, \frac{\6 f(h,T)}{\6 h} \epp
\end{equation}
It has a simple expression in terms of the logarithmic derivative of
the auxiliary function,
\begin{equation} \label{defsigma}
     \s (\la) = - T \6_h \ln (\fa (\la|\k)) \epc
\end{equation}
namely,
\begin{equation} \label{msdens}
     m(h,T) = - \2 - \int_{C} \frac{\rd \la}{2 \p \i} \,
                \frac{\re (- \la) \s (\la)}{1 + \fa(\la|\k)} \epp
\end{equation}
The function $\s$ satisfies the linear integral equation
\begin{equation} \label{intsigma}
     \s (\la) = 1 + \int_{C} \frac{\rd \m}{2 \p \i}
                \frac{K(\la - \m) \s (\m)}{1 + \fa (\m|\k)} \epp
\end{equation}
Its zero temperature limit
\begin{equation}
     \x (\la) = \lim_{T \rightarrow 0+} \s(\la)
\end{equation}
is called the dressed charge. It plays an important role in the
calculation of the asymptotics of correlation functions at $T = 0$.
For the lack of any better name we shall call $\s$, and also an
$\a$-generalization of $\s$ to be considered below, the dressed charge as
well.

Another possibility of expressing the magnetization per lattice site
(\ref{msdens}) is by means of a magnetization density $G$ satisfying
\begin{equation} \label{intg}
     G(\la) = \re(-\la) + \int_C \frac{\rd \m}{2 \p \i}
              \frac{K(\la - \m) G(\m)}{1 + \fa (\m|\k)} \epp
\end{equation}
Applying the `dressed function trick' to (\ref{intsigma}) and (\ref{intg})
we obtain
\begin{equation} \label{msdens2}
     m(h,T) = - \2 - \int_{C} \frac{\rd \la}{2 \p \i} \,
                \frac{G (\la)}{1 + \fa(\la|\k)} \epp
\end{equation}

The integrability of the XXZ chain manifests itself in the existence
of commuting families of transfer matrices and $Q$-operators of the
associated six-vertex model \cite{Babook}. With an appropriate staggered
choice of the horizontal spectral parameters the partition function of the
six-vertex model on a rectangular lattice approximates the partition
function of the XXZ chain \cite{Kluemper93}. The approximation becomes exact
in the so-called Trotter limit, when the extension of the lattice in
vertical direction goes to infinity. By a modification of the boundary
conditions in vertical direction we can also obtain an expression for the
density matrix of a finite segment of the spin chain \cite{GKS04a,GHS05}.
The column-to-column transfer matrix in this approach is called the quantum
transfer matrix. It satisfies a $t$-$Q$-equation as well, which becomes a
functional equation for the eigenvalues of the involved operators due
to their commutativity.

We denote the dominant eigenvalue of the quantum transfer matrix
by $\La(\la|\k)$. This eigenvalue alone determines the free energy
in the thermodynamic limit, when the horizontal extension of the
lattice tends to infinity, $f (h,T) = - T \ln \La (0|\k)$. Let the
corresponding $Q$-function be $Q(\la|\k)$. Then $\La$ and $Q$ satisfy
the $t$-$Q$-equation
\begin{equation} \label{tq}
     \La (\la|\k) Q(\la|\k) =
        q^\k a(\la) Q(\la - \h|\k) + q^{-\k} d(\la) Q(\la + \h|\k) \epc
\end{equation}
where $a(\la)$ and $d(\la)$ are the pseudo vacuum eigenvalues of the
diagonal entries of the monodromy matrix associated with the quantum
transfer matrix,
\begin{equation}
\label{detpara}
     a(\la) = \biggl( \frac{\sh(\la + \frac{\be}{N})}
	                     {\sh(\la + \frac{\be}{N} - \h)}
		\biggr)^{\mspace{-6mu} \frac{N}{2}} \epc \qd
     d(\la) = \biggl( \frac{\sh(\la - \frac{\be}{N})}
	                     {\sh(\la - \frac{\be}{N} + \h)}
		\biggr)^{\mspace{-6mu} \frac{N}{2}} \epc
\end{equation}
and $\be = 2 J \sh (\h)/T$.

Using the $Q$-functions corresponding to the dominant eigenvalue the
auxiliary function $\fa$ can be expressed as
\begin{equation} \label{aqq}
     \fa(\la|\k) = \frac{q^{-2 \k} d(\la) Q(\la + \h|\k)}
                        {a(\la) Q(\la - \h|\k)} \epp
\end{equation}
In fact, the auxiliary function $\fa$ is usually defined by
(\ref{aqq}), and afterwards it is shown that $\fa$ satisfies the
non-linear integral equation (\ref{nlietemhom}) in the Trotter limit.
To be more precise, the $Q$-functions, the transfer matrix eigenvalue
and the vacuum expectation values depend implicitly on the Trotter
number $N$. Hence, $\fa$ as defined in (\ref{aqq}) depends on $N$. One
can show\footnote{For a recent pedagogical review on quantum spin chains
within the quantum transfer matrix approach see \cite{GoSu10}, submitted
to the same Festschrift volume for T. Miwa as this article.} that it
satisfies the non-linear integral equation
\begin{multline} \label{nlien}
     \ln \fa (\la|\k) = - 2 \k \h \\ + \ln \biggl[
                      \frac{\sh(\la - \frac{\be}{N})
		            \sh(\la + \frac{\be}{N} + \h)}
			   {\sh(\la + \frac{\be}{N})
			    \sh(\la - \frac{\be}{N} + \h)}
			    \biggr]^\frac{N}{2}
        - \int_{C} \frac{\rd \m}{2 \p \i} \,
          K(\la - \m) \ln (1 + \fa (\m|\k)) \epp
\end{multline}
Clearly this turns into (\ref{nlietemhom}) for $N \rightarrow \infty$.
The integral equation (\ref{nlietemhom}) is the reason why the function
$\fa$ is more useful for practical purposes than $Q$. It is hard to
determine $Q$, and $Q$ has no simple Trotter limit. On the other hand,
(\ref{nlietemhom}) determines $\fa$ directly in the Trotter limit and can
be converted into a form that can be accurately solved numerically.

Inserting (\ref{aqq}) into (\ref{defsigma}) we obtain an expression
for the dressed charge function in terms of logarithmic derivatives
of $Q$-functions.
\begin{equation} \label{sigmazero}
     \s(\la) = 1 + \frac{1}{2 \h} \biggl(
                  \frac{Q'(\la - \h|\k)}{Q(\la - \h|\k)} -
                  \frac{Q'(\la + \h|\k)}{Q(\la + \h|\k)} \biggr) \epc
\end{equation}
where the prime denotes the derivative with respect to $\k$.
For the function $G$ defined in (\ref{intg}) no such simple expression
in terms of $Q$-functions is known.

Below we shall introduce generalizations of the functions $\s$ and $G$
that depend on additional parameters. For the generalized dressed
charge we will derive a generalization of (\ref{sigmazero}). This will
be used in a derivation of the asymptotic behaviour of the generalized
magnetization density as a function of the spectral parameter.
\section{Linear integral equations}
It was shown in \cite{JMS08} that all correlation functions of the
XXZ chain regularized by a disorder parameter $\a$ can be expressed
in terms of two functions, the ratio of eigenvalues
\begin{equation}
     \r (\la) = \frac{\La (\la|\k + \a)}{\La (\la|\k)}
\end{equation}
and a function $\om$ with the essential part $\Ps (\la, \m)$ that can be
characterized in terms of solutions of certain $\a$-dependent linear
integral equations \cite{BoGo09}. A thorough understanding of these two
functions is of fundamental importance for the further study of the
correlation functions of the XXZ chain and for the application of the
lattice results to quantum field theory in various scaling limits
\cite{BJMS09bpp}.

We define the `measure'
\begin{equation}
     \rd m(\la)
        = \frac{\rd \la}{2 \p \i \, \r(\la) (1 + \fa (\la| \k))}
\end{equation}
and the deformed kernel
\begin{equation}
     K_\a (\la) = q^{- \a} \cth (\la - \h) - q^\a \cth (\la + \h) \epp
\end{equation}
Then, for $\n$ inside $C$ the function $G$ is, by definition, the solution
of the integral equation
\begin{multline} \label{newg}
     G(\la, \n) = \\ q^{-\a} \cth(\la - \n - \h) - \r (\n) \cth (\la - \n)
                   + \int_{C} \rd m(\m) K_\a (\la - \m) G(\m, \n) \epp
\end{multline}
Clearly $G$ is a generalization of the magnetization density (\ref{intg})
that depends on an additional spectral parameter and on the disorder
parameter $\a$. For simplicity we keep the same notation also for the
generalized function. $G$ enters the definition of $\Ps$ which,
for $\n_1, \n_2$ inside $C$, is defined as
\begin{equation} \label{Psi}
     \Ps (\n_1, \n_2) = \int_{C} \rd m(\m) G(\m, \n_2)
        \bigl(q^\a \cth(\m - \n_1 - \h) - \r(\n_1) \cth(\m - \n_1) \bigr)
	   \epp
\end{equation}
For $\n$ or $\n_1$, $\n_2$ outside the contour, $G$ and $\Ps$ are given
by the analytic continuations of the right hand side of (\ref{newg}) or
(\ref{Psi}), respectively.

\begin{lemma}
Asymptotic behaviour of $G$ and $\Ps$ as a functions of the spectral
parameters.
\begin{enumerate}
\item
\begin{equation} \label{limg}
     \lim_{\Re \la \rightarrow \infty} G(\la, \n) = 
     \lim_{\Re \n \rightarrow \infty} G(\la, \n) = 0 \epp
\end{equation}
\item
\begin{subequations}
\begin{align} \label{asymppsi1}
     & \lim_{\Re \n_1 \rightarrow \infty} \Ps (\n_1, \n_2) =
        - \frac{q^{- \a} - \r (\n_2)}{1 + q^{2 \k}} \epc \\[1ex]
     & \lim_{\Re \n_2 \rightarrow \infty} \Ps (\n_1, \n_2) =
        - \frac{q^\a - \r (\n_1)}{1 + q^{- 2 \k}} \epp \label{asymppsi2}
\end{align}
\end{subequations}
\end{enumerate}
\end{lemma}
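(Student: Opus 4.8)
The plan is to obtain all four limits by inserting the $\Re\rightarrow\infty$ asymptotics of the elementary ingredients into the defining equations (\ref{newg}) and (\ref{Psi}). The ingredients I need are: $\cth(\la-c)\rightarrow\pm1$ as $\Re\la\rightarrow\pm\infty$ for fixed $c$; $K_\a(\la-\m)\rightarrow\pm(q^{-\a}-q^\a)$ as $\Re\la\rightarrow\pm\infty$; $\fa(\la|\k)\rightarrow q^{-2\k}$ in both directions, which follows by letting $\Re\la\rightarrow\infty$ in (\ref{nlietemhom}) (the driving term and the kernel contribution drop out); and, via (\ref{tq}) together with (\ref{aqq}), the boundary value $\r(\la)\rightarrow\r_\infty=(q^{2\k+\a}+q^{-\a})/(1+q^{2\k})$. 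The one structural point to keep in view is that $C$ is unbounded, so these pointwise asymptotics may not be inserted naively under $\int_C$: as a spectral parameter is sent to infinity along $C$, the poles of the inhomogeneous terms and of the kernel travel to infinity along the contour as well, and the ``mass'' carried near that moving singularity must be accounted for separately.

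Everything then reduces to two objects: the boundary value $G_\infty(\n)=\lim_{\Re\la\rightarrow\infty}G(\la,\n)$ and the total mass $g(\n)=\int_C\rd m(\m)\,G(\m,\n)$. Letting $\Re\la\rightarrow\infty$ in (\ref{newg}) for fixed $\n$ gives the relation $G_\infty(\n)=\bigl(q^{-\a}-\r(\n)\bigr)+(q^{-\a}-q^\a)\,g(\n)$, so (\ref{limg}) for the $\la$-limit is equivalent to the single identity $g(\n)=(\r(\n)-q^{-\a})/(q^{-\a}-q^\a)$. For the $\n$-limit I let $\Re\n\rightarrow\infty$ in (\ref{newg}) at fixed $\la$: the inhomogeneous term tends to $\r_\infty-q^{-\a}$, while the integral receives its only surviving contribution from the region $\m\approx\n\rightarrow\infty$, where $K_\a(\la-\m)\rightarrow q^\a-q^{-\a}$; this yields a limit independent of $\la$, equal to $(\r_\infty-q^{-\a})+(q^\a-q^{-\a})\,g_\infty$ with $g_\infty=\lim_{\Re\n\rightarrow\infty}g(\n)$. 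Comparing the two iterated limits of the first relation forces $g_\infty=(\r_\infty-q^{-\a})/(q^{-\a}-q^\a)$ and hence $\lim_{\Re\n\rightarrow\infty}G(\la,\n)=0$. The same $g_\infty$, through the escaping-mass mechanism, produces (\ref{asymppsi2}): in (\ref{Psi}) the surviving contribution again comes from $\m\approx\n_2\rightarrow\infty$, where the bracket tends to $q^\a-\r(\n_1)$, giving $(q^\a-\r(\n_1))\,g_\infty=-(q^\a-\r(\n_1))/(1+q^{-2\k})$.

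It remains to evaluate the total mass $g(\n)$ for finite $\n$, which I regard as the heart of the argument. Here I would use the dressed-function trick: writing the inhomogeneous term of (\ref{newg}) as $g_0(\la,\n)$ and noting that the transpose of the integral operator has kernel $K_\a(\m-\la)=K_{-\a}(\la-\m)$, one gets $g(\n)=\int_C\rd m(\la)\,\chi(\la)\,g_0(\la,\n)$, where $\chi$ is the generalized dressed charge solving $\chi(\la)=1+\int_C\rd m(\m)\,K_{-\a}(\la-\m)\,\chi(\m)$. Into this I insert the representation of $\chi$ as a linear combination of two quotients of Baxter $Q$-functions established earlier in the paper, and evaluate the resulting $\cth$-integrals by shifting $C$ and collecting residues, the $Q$-quotients being honest meromorphic functions for which this is legitimate. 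The outcome $g(\n)=(\r(\n)-q^{-\a})/(q^{-\a}-q^\a)$ closes (\ref{limg}) for the $\la$-limit through the first relation and, inserted into (\ref{Psi}) with the $\Re\n_1\rightarrow\infty$ factor $\r_\infty-q^\a=(q^{-\a}-q^\a)/(1+q^{2\k})$, yields (\ref{asymppsi1}).

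The main obstacle is precisely this last computation together with the unboundedness of $C$. The self-consistency argument of the second paragraph disposes of the genuine double limits cleanly, but it only fixes the boundary constant $g_\infty$; the finite-$\n$ value of $g(\n)$, and with it the $\la$-limit in (\ref{limg}) and the formula (\ref{asymppsi1}), appear to require the explicit $Q$-function representation of the dressed charge, both to make the limit interchange against the infinite contour rigorous and to carry out the residue evaluation. I expect the bookkeeping of which poles of the $\cth$-factors are crossed when $C$ is deformed, and the cancellations that reduce the answer to the stated rational function of $q^{\pm\a}$ and $q^{\pm2\k}$, to be the only genuinely delicate part.
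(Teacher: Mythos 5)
Your overall architecture coincides with the paper's at the crucial point: the value of the total mass $\int_C \rd m(\la)\, G(\la,\n)$ at finite $\n$ is obtained exactly as in the paper, by the dressed function trick applied to (\ref{newg}) and (\ref{newsigma}) (your $\chi$ \emph{is} the paper's generalized dressed charge $\s$, since $K_\a(\m - \la) = K_{-\a}(\la - \m)$), followed by insertion of the $Q$-function representation of lemma \ref{lem:sigma} and a contour-shift/residue evaluation; this is precisely lemma \ref{lem:onep}. With it, your relation $G_\infty(\n) = q^{-\a} - \r(\n) + (q^{-\a} - q^\a) g(\n)$ reproduces the paper's (\ref{limg1}) and gives the $\la$-limit in (\ref{limg}), and your evaluation of (\ref{asymppsi1}) through the factor $\r_\infty - q^\a = (q^{-\a} - q^\a)/(1 + q^{2\k})$ matches the paper's chain (\ref{Psicont1})--(\ref{asymppsig1}). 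All your boundary constants check out, including $g_\infty = -1/(1+q^{-2\k})$.

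Where you genuinely deviate --- and where the gap sits --- is the treatment of the limits in the second argument. Your structural premise that $C$ is unbounded is the opposite of the paper's device: the paper chooses $C$ as a \emph{finite} rectangle of width $2R$ enclosing all Bethe roots (possible also in the Trotter limit), so that sending $\Re\n \rightarrow \infty$ carries $\n$ \emph{outside} $C$, and your ``escaping mass'' becomes an exact residue term in the analytic continuation, cf.\ (\ref{newgcont}), (\ref{Psicont1}), (\ref{Psicont2}); all remaining limits then pass under integrals over a compact contour without further ado. Your version of the mechanism is circular as stated: the claim that the only surviving contribution to the integral comes from $\m \approx \n$ presupposes the pointwise decay of $G(\m,\n)$ at fixed $\m$ --- the very statement being proved --- and your determination of $g_\infty$ additionally assumes that the two iterated limits of $G$ agree, which is not justified. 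The paper needs neither assumption: in (\ref{newgcont}) the limiting driving term cancels exactly against the residue term, $(\r_\infty - q^{-\a}) - (q^\a - q^{-\a})/(1+q^{-2\k}) = 0$, leaving the homogeneous equation $g(\la) = \int_C \rd m(\m)\, K_\a(\la - \m)\, g(\m)$, whence $g \equiv 0$ by uniqueness; (\ref{asymppsi2}) then follows from the boundary term of (\ref{Psicont2}) with no escaping-mass bookkeeping at all. Your heuristic produces the correct numbers only because the escaping mass is this residue term in disguise; to make the proof sound, adopt the finite contour and the continuation formulas, after which the rest of your plan goes through as in the paper.
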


\begin{proof}
We may choose the contour $C$ as the rectangular contour of hight
slightly less than $\g$ and of width $2R$ depicted in
figure~\ref{fig:kontur}. $R$ must be sufficiently large to include
all Bethe roots. This is trivially possible for finite Trotter number,
but also in Trotter limit $N \rightarrow \infty$ \cite{GoSu10}. Then
the right hand side of (\ref{newg}) is holomorphic in $\la$ for
$|\Im \la| < \g/2$. It follows that
\begin{equation} \label{limg1}
     \lim_{\Re \la \rightarrow \infty} G(\la , \n) =
        q^{-\a} - \r(\n) - (q^\a - q^{- \a}) \int_C \rd m(\m) G(\m, \n)
	= 0 \epp
\end{equation}
Here the second equation will appear as lemma \ref{lem:onep} below.
We postpone the proof, because it needs some preparation.

For the calculation of the asymptotics of $G$ for large $\Re \n$ we have
to take into account that $G(\la, \n)$ as a function of $\la$ has pole
at $\la = \n$ with residue $- \r (\n)$. Hence, for $\n$ outside $C$,
\begin{multline} \label{newgcont}
     G(\la, \n) =  q^{-\a} \cth(\la - \n - \h) - \r (\n) \cth (\la - \n) \\
                   - \frac{K_\a (\la - \n)}{1 + \fa(\n |\k)}
                   + \int_{C} \rd m(\m) K_\a (\la - \m) G(\m, \n) \epp
\end{multline}
Using that
\begin{equation} \label{limrhoa}
     \lim_{\Re \n \rightarrow \infty} \r (\n) =
        \frac{q^{\k + \a} + q^{- \k - \a}}{q^\k + q^{- \k}} \epc \qd
     \lim_{\Re \n \rightarrow \infty} \fa (\n |\k) = q^{- 2 \k}
\end{equation}
and setting $g(\la) = \lim_{\n \rightarrow \infty} G(\la, \n)$ we obtain
from (\ref{newgcont})
\begin{equation}
     g(\la) = \int_C \rd m(\m) K_\a (\la - \m) g(\m) \epp
\end{equation}
Then $g(\la) = 0$, and (\ref{limg}) is proved.

A similar argument can be applied to prove (\ref{asymppsi2}). For $\n_2$
outside $C$ we have
\begin{multline} \label{Psicont2}
     \Ps (\n_1, \n_2) = \int_{C} \rd m(\m) G(\m, \n_2)
        \bigl(q^\a \cth(\m - \n_1 - \h) - \r(\n_1) \cth(\m - \n_1) \bigr) \\
          - \frac{1}{1 + \fa (\n_2 |\k)} \bigl( q^\a \cth (\n_2 - \n_1 - \h)
          - \r (\n_1) \cth(\n_2 - \n_1) \bigr) \epp
\end{multline}
Using the second equation (\ref{limg}) and (\ref{limrhoa}) we obtain
(\ref{asymppsi2}).

For the proof of (\ref{asymppsi1}) we note that
\begin{multline} \label{Psicont1}
     \Ps (\n_1, \n_2) = \int_{C} \rd m(\m) G(\m, \n_2)
        \bigl(q^\a \cth(\m - \n_1 - \h) - \r(\n_1) \cth(\m - \n_1) \bigr) \\
          - \frac{G(\n_1, \n_2)}{1 + \fa (\n_1 |\k)}
\end{multline}
if $\n_1$ is outside $C$. Using (\ref{limg1}) we conclude that
\begin{equation} \label{asymppsig1}
     \lim_{\n_1 \rightarrow \infty} \Ps (\n_1, \n_2) =
        - \frac{q^\k - q^{- \k}}{q^\k + q^{- \k}} 
          \lim_{\n_1 \rightarrow \infty} G (\n_1, \n_2)
        - \frac{q^{- \a} - \r (\n_2)}{1 + q^{2 \k}} \epp
\end{equation}
Thus, (\ref{asymppsi1}) follows by means of (\ref{limg}).
\end{proof}

Here a few comments are in order. For the correlation functions of the
XXZ chain, \cite{JMS08,BoGo09} it is actually not the function $\Ps$ but the
closely related function $\om$ \footnote{We follow here the notation of
\cite{BJMS09bpp}. In ref.\ \cite{BoGo09} a slightly different notation for
$\om$ was used.} which is at the heart of the theory
\begin{multline}
     \om(\n_1, \n_2) = 
        2 \Ps (\n_1, \n_2)\re^{\a (\n_1 - \n_2)} +\\ 
        4\, \bigl( \bigl(1+\rho(\nu_1)\rho(\nu_2) \bigr)g(\xi)
	   -\rho(\nu_1)g(q^{-1}\xi)-\rho(\nu_2)g(q\xi)\bigr)
\end{multline}
where $g(\xi)=\Delta_{\xi}^{-1}\psi(\xi)$ with $\xi=\re^{ \n_1 - \n_2}$
should be understood as in (2.10) of ref.\ \cite{BJMS09bpp}. Using the
explicit form of the function $\psi(\xi)=\frac{\xi^{\a}}{2}
\frac{\xi^2+1}{\xi^2-1}$ and the above lemma (2.1) one can see that 
\[
\lim_{\nu_1\rightarrow\pm\infty}\re^{-\a (\n_1 - \n_2)}\om(\n_1,\n_2)=0,\quad
\lim_{\nu_2\rightarrow\pm\infty}\re^{-\a (\n_1 - \n_2)}\om(\n_1,\n_2)=0\epp
\]
This is one of normalization conditions for $\om$ introduced in \cite{JMS08}.

The asymptotics of $G$ and $\Psi$ with
respect to the first argument was derived in \cite{BoGo09}. The reasoning
there was also based on the second equation (\ref{limg1}), which was
obtained rather indirectly by means of the reduction property of the density
matrix and a multiple integral representation for the six-vertex model
with disorder parameter. Below we shall present a more direct proof of
it in lemma \ref{lem:onep}, based on a representation of the generalized
dressed charge in terms of $Q$-functions.

Perhaps the simplest proof of lemma \ref{lem:onep} utilizes the symmetry
\cite{JMS08,BoGo09}
\begin{equation}
     \Ps (\n_1, \n_2|\k, \a) = \Ps (\n_2, \n_1|- \k, - \a) \epp
\end{equation}
If we combine this with $\r (\la|\k, \a) = \r (\la|- \k, - \a)$, then
(\ref{asymppsi1}) follows from (\ref{asymppsi2}). But (\ref{asymppsi1})
inserted into (\ref{asymppsig1}) implies $\lim_{\n_1 \rightarrow \infty}
G (\n_1, \n_2) = 0$, and lemma \ref{lem:onep} follows with the first
equation (\ref{limg1}).

Still, this is a little indirect and unsatisfactory. Here we are going
for a more direct proof based upon the properties of the dressed charge
function defined by
\begin{equation} \label{newsigma}
     \s (\la) = 1 + \int_C \rd m(\m) \s(\m) K_\a (\m - \la) \epp
\end{equation}

\begin{lemma} \label{lem:sigma}
Dressed charge in terms of $Q$-functions.
\begin{equation} \label{sigmaphi}
     \s(\la) = \frac{q^\a \phi (\la - \h) - q^{- \a} \phi (\la + \h)}
                    {\phi_0 (q^\a - q^{- \a})} \epc
\end{equation}
where
\begin{equation} \label{defssigmaphi}
     \phi (\la) = \frac{Q(\la |\k + \a)}{Q(\la |\k)} \epc \qd
     \phi_0 = \ch \biggl( \int_C \frac{\rd \la}{2 \p \i}
                              \ln \biggl( \frac{1 + \fa (\la |\k + \a)}
                                               {1 + \fa (\la |\k)} \biggr)
                          \biggr) \epp
\end{equation}
\end{lemma}

\begin{proof}
We recall from the appendix of \cite{BoGo09} that
\begin{equation}
     \phi (\la) = \frac{Q(\la |\k + \a)}{Q(\la |\k)}
                = \prod_{j=1}^{N/2} \frac{\sh(\la - \la_j (\k + \a))}
                                         {\sh(\la - \la_j (\k))} \epc
\end{equation}
where the $\la_j$ are the Bethe roots of the dominant eigenstate of
the quantum transfer matrix which are located inside the contour $C$.
Due to the $t$-$Q$-equation (\ref{tq})
\begin{equation} \label{id0}
     \phi (\la) = \frac{q^{-\a} \phi (\la + \h)}{\r (\la)}
                  + \frac{q^\a \phi (\la - \h) - q^{- \a} \phi (\la + \h)}
                         {\r(\la) (1 + \fa (\la |\k))} \epp
\end{equation}
Here the first term on the right hand side is holomorphic inside $C$.
Hence,
\begin{equation} \label{id1}
     \int_C \rd m(\m)
        \bigl( q^\a \phi (\m - \h) - q^{- \a} \phi (\m + \h) \bigr)
        K_\a (\m - \la) =
        \int_C \frac{\rd \m}{2 \p \i} \, \phi (\m) K_\a (\m - \la) \epp
\end{equation}
The latter integral can be calculated. Note that
\begin{equation} \label{period}
     \phi (\la + \i \p) = \phi (\la) \epc \qd
     K_\a (\la + \i \p) = K_\a (\la)
\end{equation}
and
\begin{equation} \label{asymptotics}
     \lim_{\Re \la \rightarrow \pm \infty} \phi (\la) = b^{\pm 1} \epc
        \qd b = \exp \Bigl(
                \sum_{j=1}^{N/2} \bigl(\la_j (\k) - \la_j (\k + \a)\bigr)
                     \Bigr) \epp
\end{equation}
With the contours sketched in figure \ref{fig:kontur} it follows that
\begin{multline} \label{id2}
     \int_{\tilde C + C} \frac{\rd \m}{2 \p \i} \, \phi (\m) K_\a (\m - \la)
        \\ =
     \int_{I_2 + I_4} \frac{\rd \m}{2 \p \i} \, \phi (\m) K_\a (\m - \la)
        = - \frac{b + b^{-1}}{2} (q^\a - q^{- \a}) \\[1ex]
     = q^{- \a} \phi (\la + \h) - q^\a \phi (\la - \h) +
       \int_C \frac{\rd \m}{2 \p \i} \, \phi (\m) K_\a (\m - \la) \epp
\end{multline}
Here we have used the periodicity (\ref{period}) in the first equation,
the fact that the integral is independent of $R$ and the asymptotics
(\ref{asymptotics}) in the second equation, and the fact that
$\phi$ is free of poles inside $\tilde C$ in the third equation.
Setting $\phi_0 = (b + b^{- 1})/2$ and combining (\ref{id1}) and
(\ref{id2}) we obtain (\ref{sigmaphi}).
\begin{figure}
    \centering
    \includegraphics[height=60.mm,angle=0,clip=true]{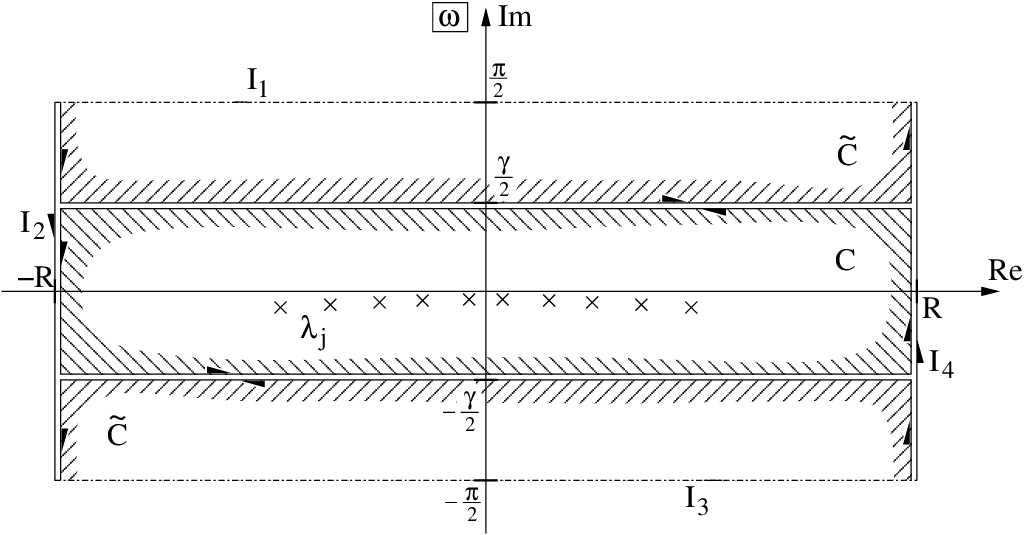}
    \caption{\label{fig:kontur} Contours used in the proofs of lemma
             \ref{lem:sigma} and lemma \ref{lem:onep}.}
\end{figure}

It remains to show the second equation (\ref{defssigmaphi}). It follows
from
\begin{equation}
     \int_C \frac{\rd \la}{2 \p \i} \:
          \ln \bigl(1 + \fa (\la|\k)\bigr)
        = - \int_C \frac{\rd \la}{2 \p \i} \: \la \, \6_\la
          \ln \bigl(1 + \fa (\la|\k)\bigr)
        = - \sum_{j=1}^{N/2} \la_j (\k) - \be/2 \epc
\end{equation}
and the proof is complete.
\end{proof}
Clearly (\ref{sigmaphi}) turns into (\ref{sigmazero}) in the limit $\a
\rightarrow 0$. Equipped with lemma \ref{lem:sigma} we can now proceed
with proving
\begin{lemma} \label{lem:onep}
An identity for one-point functions.
\begin{equation}
     \int_C \rd m(\la) G(\la, \n)
        = \frac{q^{- \a} - \r(\n)}{q^\a - q^{- \a}} \epp
\end{equation}
\end{lemma}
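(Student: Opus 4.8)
The plan is to exploit the duality (the `dressed function trick') between the integral equation (\ref{newg}) for $G$ and the integral equation (\ref{newsigma}) for the dressed charge $\s$, and then to feed in the $Q$-function representation of $\s$ supplied by lemma \ref{lem:sigma}. Writing the inhomogeneity of (\ref{newg}) as $F(\la,\n) = q^{-\a}\cth(\la-\n-\h) - \r(\n)\cth(\la-\n)$, I would first multiply (\ref{newg}) by $\rd m(\la)\s(\la)$ and integrate over $C$. In the resulting double integral I swap the order of integration and apply (\ref{newsigma}) in the relabelled form $\int_C\rd m(\la)\s(\la)K_\a(\la-\m) = \s(\m)-1$; the common term $\int_C\rd m\,\s\,G$ then cancels between the two sides, leaving the clean identity $\int_C\rd m(\la)G(\la,\n) = \int_C\rd m(\la)\s(\la)F(\la,\n)$. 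This is the natural route because $G$ itself has no simple $Q$-function representation, whereas $\s$ does.

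The second step is to strip off the measure. Inserting $\s$ from (\ref{sigmaphi}) and using (\ref{id0}) to rewrite $\frac{q^\a\phi(\la-\h)-q^{-\a}\phi(\la+\h)}{\r(\la)(1+\fa(\la|\k))} = \phi(\la) - \frac{q^{-\a}\phi(\la+\h)}{\r(\la)}$, I obtain $\rd m(\la)\s(\la) = \frac{\rd\la}{2\p\i\,\phi_0(q^\a-q^{-\a})}\bigl[\phi(\la) - \frac{q^{-\a}\phi(\la+\h)}{\r(\la)}\bigr]$. The problem thereby reduces to two ordinary contour integrals over $C$, namely $J_1 = \int_C\frac{\rd\la}{2\p\i}\phi(\la)F(\la,\n)$ and $J_2 = \int_C\frac{\rd\la}{2\p\i}\frac{q^{-\a}\phi(\la+\h)}{\r(\la)}F(\la,\n)$, with the sought quantity equal to $(J_1-J_2)/\bigl(\phi_0(q^\a-q^{-\a})\bigr)$.

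For $J_2$ the prefactor $q^{-\a}\phi(\la+\h)/\r(\la)$ is holomorphic inside $C$ (precisely the term already identified as holomorphic in the proof of lemma \ref{lem:sigma}), so only the pole of $F$ at $\la=\n$ contributes; its residue $-\r(\n)$ gives $J_2 = -q^{-\a}\phi(\n+\h)$. For $J_1$ I reuse the period-strip argument of lemma \ref{lem:sigma}: the integrand is $\i\p$-periodic by (\ref{period}), so closing with the auxiliary contour $\tilde C$ of figure \ref{fig:kontur} and reducing $\tilde C + C$ to its two vertical edges, the asymptotics (\ref{asymptotics}) of $\phi$ together with $\cth\to\pm1$ yield $\int_{\tilde C + C}\frac{\rd\la}{2\p\i}\phi F = \frac{b+b^{-1}}{2}(q^{-\a}-\r(\n)) = \phi_0(q^{-\a}-\r(\n))$. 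Since $\phi$ is pole-free inside $\tilde C$, the only pole of $F$ there is at $\la=\n+\h$ with residue $q^{-\a}$, so $\int_{\tilde C}\frac{\rd\la}{2\p\i}\phi F = q^{-\a}\phi(\n+\h)$ and hence $J_1 = \phi_0(q^{-\a}-\r(\n)) - q^{-\a}\phi(\n+\h)$.

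Finally I would note the crucial cancellation $J_1 - J_2 = \phi_0(q^{-\a}-\r(\n))$, in which the $\phi(\n+\h)$ terms drop out, so that $\int_C\rd m(\la)G(\la,\n) = (q^{-\a}-\r(\n))/(q^\a-q^{-\a})$, as claimed; this also closes the gap left after (\ref{limg1}). I expect the main obstacle to be the bookkeeping of the contours: correctly deciding that $\la=\n$ sits inside $C$ while $\la=\n+\h$ sits inside $\tilde C$ (which rests on $\n$ lying inside $C$ and on the height of $C$ being less than $\g=\Im\h$), fixing the orientations so that the vertical-edge contributions reproduce the factor $\phi_0$, and justifying the interchange of integrations in the dressed function trick.
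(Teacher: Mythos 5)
Your proposal is correct and follows essentially the same route as the paper's own proof: the dressed function trick reducing $\int_C \rd m\, G$ to $\int_C \rd m\, \s\, F$, the insertion of (\ref{sigmaphi}) together with (\ref{id0}) to replace $\rd m(\la)\s(\la)$ by $\frac{\rd\la}{2\p\i\,\phi_0(q^\a-q^{-\a})}\bigl(\phi(\la)-\frac{q^{-\a}\phi(\la+\h)}{\r(\la)}\bigr)$, and the same contour bookkeeping with $\tilde C$ and the vertical edges $I_2+I_4$, including the cancellation of the two $q^{-\a}\phi(\n+\h)$ residue contributions. You merely make explicit some residue evaluations that the paper compresses into its single displayed chain of equations.
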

\begin{proof}
First of all applying the dressed function trick to (\ref{newg}) and
(\ref{newsigma}) we obtain
\begin{equation}
     \int_C \rd m(\la) G(\la, \n)
        = \int_C \rd m(\la) \s (\la) \bigl(
          q^{-\a} \cth(\la - \n - \h) - \r (\n) \cth (\la - \n) \bigr) \epp
\end{equation}
Then we insert (\ref{sigmaphi}) and (\ref{id0}) into the right hand side.
It follows that
\begin{align}
     & \int_C \rd m(\la) G(\la, \n) = \frac{1}{\phi_0 (q^\a - q^{- \a})}
        \notag \\[1ex] & \mspace{36.mu} \int_C \frac{\rd \la}{2 \p \i}
        \biggl(\phi (\la) - \frac{q^{- \a} \phi (\la + \h)}{\r (\la)}
               \biggr) \bigl( q^{-\a} \cth(\la - \n - \h)
                              - \r (\n) \cth (\la - \n) \bigr) =
        \displaybreak[0] \notag \\
        & = \frac{1}{\phi_0 (q^\a - q^{- \a})} \biggl(
          q^{- \a} \phi(\n + \h) \notag \\ & \mspace{135.mu}
          + \int_{I_2 + I_4 - \tilde C} \frac{\rd \la}{2 \p \i} \:
            \phi (\la) \bigl( q^{-\a} \cth(\la - \n - \h)
                              - \r (\n) \cth (\la - \n) \bigr) \biggr)
        \notag \\ & = \frac{q^{- \a} - \r (\n)}{q^\a - q^{- ßa}} \epp
\end{align}
Here we have again referred to figure \ref{fig:kontur} in the second
equation.
\end{proof}

\section{Conclusions}
We would like to conclude with three more remarks. First, all the above
remains valid if we consider a more general six-vertex model with a more
general inhomogeneous choice of parameters in vertical direction. In that
case the functions $a(\la)$ and $d(\la)$ in (\ref{detpara}) and also the
driving term in (\ref{nlien}) have to be modified as in \cite{BoGo09}.

Second, in reference \cite{BoGo09} the calculation of the limit
$\lim_{\n_1 \rightarrow \infty} \Ps (\n_1, \n_2)$ was the only point,
where we had to resort to a multiple integral representation, when we
showed that the functions $\om$ as defined in references \cite{BoGo09}
and \cite{JMS08} are identical. With the proof presented in this note
the approach to the correlation functions of the XXZ chain, based on
the discovery of a hidden Grassmann symmetry, as developed in
\cite{BJMST08a,JMS08,BoGo09} becomes logically independent of the
multiple integral representation that was also obtained in \cite{BoGo09}.

Third, the function $\phi$ used above to express $\s$ seems quite
interesting and may deserve further attention. Is it possible to express
the solutions of other linear integral equations in terms of $\phi$?
And is there a useful integral equation for $\phi$ itself? We hope we
can come back to these questions in the future.

\section*{Acknowledgment}
It a great pleasure and honour to dedicate this note to our dear friend
Tetsuji Miwa on the occasion of his 60th birthday. We are grateful to Fedor
Smirnov for suggesting us to prove lemma \ref{lem:onep} directly. We wish
to thank Alexander Seel for drawing the figure for us. Our work was
generously supported by the Volkswagen foundation.


\end{document}